\newcommand{\tab}[1][1.5cm]{\hspace*{#1}}
\newtheorem{theorem}{Theorem}[section]
\newtheorem{note}{Note}
\newtheorem{case}{Case}
\begin{document}
	
	\begin{center}
		{\bf\LARGE Enhancing Data Security through Rainbow Antimagic Graph Coloring for Secret-Share Distribution and Reconstruction}
		\vspace{8mm}
		
		{\large \bf Ra\'{u}l M. Falc\'{o}n $^{1}$, K. Abirami $^{2}$, N. Mohanapriya $^{2}$, Dafik $^{3}$ }\\
		\vspace{3mm}

        $^{1}$ Department Applied Mathematics I, School of Architecture, Universidad de Sevilla, 41012 Sevilla, Spain.\\
		$^{2}$ Department of Mathematics, Kongunadu Arts and Science College, Coimbatore-641 029, Tamil Nadu, India.\\
		$^{3}$PUI-PT Combinatorics and Graph, CGANT - University of Jember, Jember, Indonesia.\\
  
		e-mail: \url{abirami7.abi@gmail.com, n.mohanamaths@gmail.com} 
		\vspace{2mm}
	\end{center}

	\textbf{Abstract.} Now-a-days, ensuring data security has become an increasingly formidable challenge in safeguarding individuals' sensitive information. Secret-sharing scheme has evolved as a most successful cryptographic technique that allows a secret to be divided or distributed among a group of participants in such a way that only a subset of those participants can reconstruct the original secret. This provides a safe level of security and redundancy, ensuring that no single individual possesses the complete secret. The implementation of Rainbow Antimagic coloring within these schemes not only safeguards the data but also ensures an advanced level of information security among multi-participant groups. Additionally, the retrieved data is reconstructed and can be disseminated to all group participants via multiple rounds of communication.\\
	\textbf{Mathematics subject classification} : 05C15.\\
	\textbf{Keywords:} Rainbow antimagic coloring, Secret-Share scheme, Data security, Cryptography.

\section{Introduction}

\tab Graph coloring problem has evolved as one of the most fundamental graph solving techniques in the field of graph theory and computer networks. Usually graph coloring problems involve coloring the vertices of a graph that are connected by edges, such that no two vertices that are said to be adjacent, receives a same color. For the problem in question, we color the edges of the graph, such that no two edges incident on a common vertex shares the same color. This intricate task is addressed using the renowned \textit{\textquoteleft rainbow coloring\textquoteright} technique for edges, accompanied by an \textit{\textquoteleft antimagic labeling\textquoteright} for the vertices. The amalgamation of these approaches yields the Rainbow Antimagic Connection Number (RACN) for the given graph.\\
\tab The application of the aforementioned coloring technique within a secret-share scheme, facilitates the secure transmission of information or data, such as Passwords, Financial details, Bank account details, etc., through its dispersion of data into $m$ pieces of secret codes. Secret-share schemes enhances various scenarios such as, cryptographic key management, data recovery, access control, blockchain and cryptocurrencies. In a $(k,m)$-threshold secret sharing scheme, $k$ denotes the threshold number of shares required and $m$ denotes the minimum number of participants required to reconstruct the secret. This threshold is set when the secret sharing scheme is established and determines how many participants need to cooperate to reveal the secret. In this secret-share scheme, the aggregation of any $k$ pieces or more guarantees reconstructability, yet the aggregation of even $k-1$ pieces fails to reconstruct the code, thus establishing a robust and secure interconnection among nodes (i.e.,participants). During the reconstruction phase, the specified threshold number of participants must come together, pool their shares, and perform a reconstruction process using the cryptographic methods used during the share generation. Leveraging the principles of RACN, we introduce a method for the secure transmission of secret shares. This approach ensures that secret information can be successfully reconstructed when transmitted along a rainbow path. As a result, the adherence to the rainbow path condition during the transfer of the code from one node to another becomes pivotal; any deviation from this condition prevents the reconstruction of the secret by any other participants involved. By employing the RACN logic framework within a secret-share scheme, the process of reconstructing the desired information or data is facilitated.\\
\tab On further expanding this concept, in situations where the retrieved information needs to be disseminated to all participants in the scheme, a communication strategy is devised, involving rounds of communication. This approach ensures the secure and efficient distribution of the retrieved information to all relevant participants in the multi-participant group.

\section{RACN in Secret-Sharing}
\begin{figure}[h!]
	\begin{center}
		\tab[-0.7cm]\includegraphics[width=1.05\textwidth]{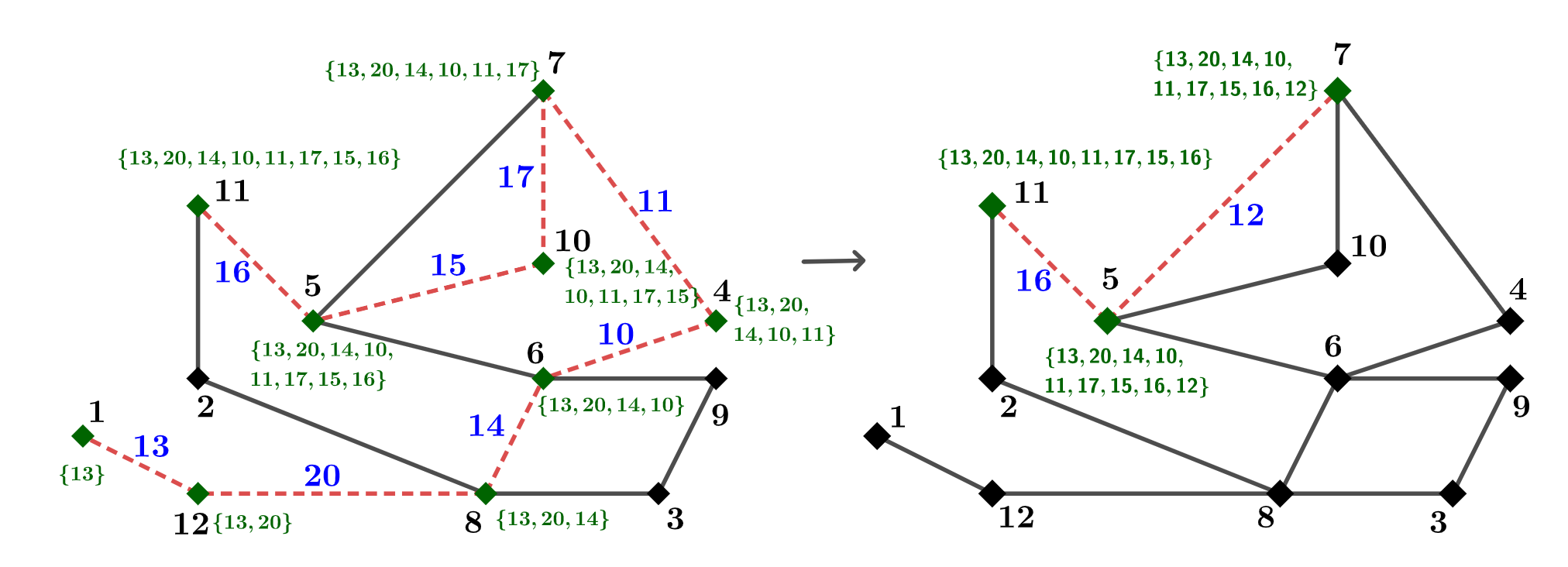}\\
		\text{a. Reconstruction Phase 1 \tab[4.5cm] b. Reconstruction Phase 2 }
		\caption{Illustration of Reconstruction phase using RACN}
	\end{center}
\end{figure}

\noindent \textbf{\textit{Initialization:}} This step involves choosing the secret that needs to be shared among $n$ participants. A $(k,m) - threshold$ is determined, where $k$ represents a threshold value of shares and $m$ represents a minimum number of participants required to reconstruct the secret.\\
\noindent \textbf{\textit{Share Distribution:}} Secret sharing allows for the distribution of critical information across multiple participants, reducing the risk associated with a single point of failure. This is especially important in scenarios where a single entity should not have full access to the secret. Each node representing a participant has access to a secret code that is represented by their respective edge weights. The distribution of secret codes among $n$ participants adheres to the principle of Rainbow Antimagic Connection Number (RACN) by splitting them into $k$ pieces of shares. The value of minimum $k$ signifies the least number of colors employed by the graph to fulfill the rainbow coloring prerequisite. As a result, a secret, represented by their respective edge weights, can only be accessed by a pair of participants. The approach of distributing these secrets can be adapted depending on the accessible communication channels and the desired level of security.\\
\noindent \textbf{\textit{Reconstruction Phase:}} This phase involves a series of sequential reconstruction iterations aimed at gathering all $k$ shares. Initially, a subset of secret codes is collected, and the remaining pieces are gathered through subsequent phases. For instance, from Figure 1., in the first reconstruction phase, a maximal rainbow path is established, encompassing a minimum of $n-k$ pieces, while ensuring coverage of up to $k-1$ pieces at most. Ideally, obtaining the longest rainbow path that covers all colors would be optimal if achieved in the initial iteration. When a minimum of $m$ participants come together they can collaboratively reconstruct the original secret using their shares.\\
\noindent\textbf{\textit{Privacy:}} Secret sharing ensures that individual participants only possess partial information and cannot deduce the complete secret on their own. This protects the privacy of the secret from any single participant. This scheme ensures that any subset of participants with fewer than $k$ shares cannot deduce any information about the original secret.\\
\noindent\textbf{\textit{Flexibility:}} The scheme may allow for dynamic addition or removal of participants without compromising the security of the shared secret. Though we consider only a static group of participants.\\
\tab Secret sharing schemes can have different variations, such as Shamir's Secret Sharing ~\cite{ref_Sha}, Blakley's Scheme~\cite{ref_Bla}, and more. The exact steps and mathematical techniques used can vary based on the chosen scheme, but the primary goal of the process is to ensure that the original secret remains confidential and can be reconstructed only when the required threshold of participants collaborates. Further innovations in this field are inspired from the following key references ~\cite{ref_Bei, ref_Bla, ref_BlBo, ref_Chi, ref_Fal, ref_Sha, ref_Cal_Bk}. Figure 1. illustrates the two stages of reconstruction, with parameters set at $n=12$, $k=9$, and $m=9$. In Figure 1(a), a depiction of the maximal longest rainbow path encompassing $k-1$ colors is presented. Notably, the longest rainbow path comprises only $8$ colors, necessitating the initiation of a successive phase. Figure 1(b) aptly depicts the culmination of the reconstruction phase, where participants $5$ and $7$ have successfully acquired all the individual share pieces.\\

\noindent\textit{\textbf{Problem 1. }How can the reconstructed secret be securely communicated to all participants in a multi-participant group after successful reconstruction?}\\
\tab Certainly, in the context of securely communicating the reconstructed secret to all participants within a multi-participant group, the process involves a series of well-defined steps to ensure the secret's confidentiality and integrity. Through multiple rounds of communication, the secret is effectively and securely conveyed to all participants. This is accomplished by employing cryptographic techniques, secure communication channels, and authentication mechanisms. Each step is carefully designed to prevent unauthorized access to the secret and safeguard its content from potential interception or tampering. The approach begins by encrypting the reconstructed secret using robust encryption algorithms. Encryption transforms the secret into an unreadable format, which can only be deciphered using the appropriate decryption keys possessed by authorized participants. This ensures that even if the communication is intercepted, the intercepted data remains unintelligible without the decryption keys.\\
\tab The secure communication of the reconstructed secret to all participants involves a strategic combination of encryption, secure communication channels, and authentication mechanisms. These measures collectively ensure that the secret's confidentiality remains intact, even in multiple rounds of communication, thereby upholding the security of the entire process.




\begin{figure}[h!]
	\begin{center}
		\includegraphics[width=1\textwidth]{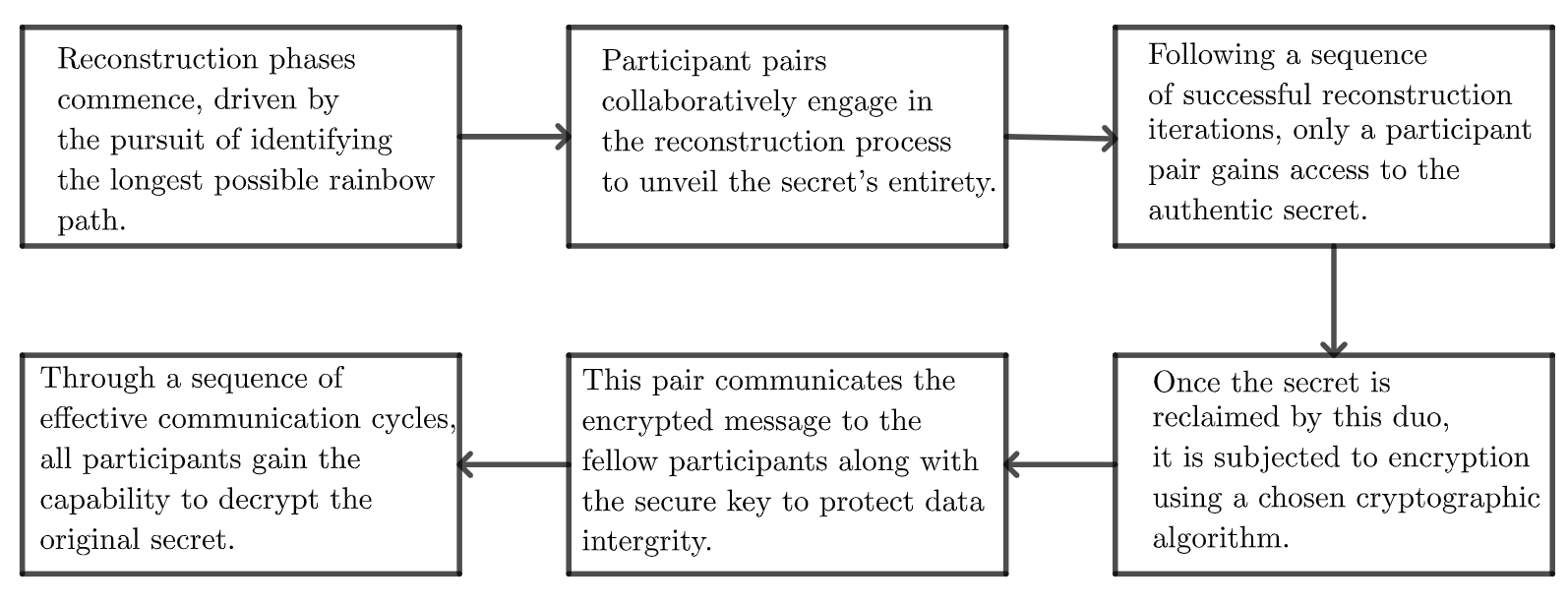}\\
		\caption{Illustration of Communication process}
	\end{center}
\end{figure}

\noindent \tab The transmission of the encrypted message can be effectively facilitated through closed circuits (i.e, cycles) within the network. For instance, from Figure 1., as the second reconstruction phase concludes, participants $5$ and $7$ successfully retrieve, reconstruct, and subsequently encrypt the secret for secure dissemination among the group. Therefore, during Round-1 communication, a closed circuit is established involving participants $5$, $7$, or both.
\begin{center}
$C_1 \Longleftrightarrow 7 - 10 - 5 - 6 - 4 - 7$\\
\tab[-1.6cm] $C_2 \Longleftrightarrow 7 - 10 - 5 - 7$\\
$C_3 \Longleftrightarrow 5 - 11 - 2 - 8 - 6 - 5$\\
\end{center}
\noindent However, $C_1$ and $C_3$ are identified as the most extensive and optimal paths for facilitating communication. Consequently, as Round-1 concludes, participants $2, 4, 5, 6, 7, 8, 10, 11$ have successfully received the encrypted secret.\\
\tab Moving to Round-2, examine all conceivable closed circuits encompassing both the aforementioned participants and those that remain unmentioned. Through this process, we obtain
\begin{center}
	$C_4 \Longleftrightarrow 6 - 9 - 3 - 8 - 6$ 
\end{center}
\noindent As Round-2 culminates, participants $1$ and $12$ remain the sole individuals who have not yet been engaged. However, there are no feasible closed circuits involving them. Consequently, the only participant capable of transmitting the encrypted secret becomes $8$. Hence, in Round-3, the communication proceeds as: 
$8 - 12 - 1$.\\

\noindent\textbf{\textit{Problem 2.}} \textit{How do factors like participant count, threshold values, and distribution method affect the minimum stages needed for successful secret reconstruction in this scheme?}\\
\tab The number of participants, threshold values, and distribution method play crucial roles in determining the minimum stages required for successful secret reconstruction in this scheme. The greater the number of participants, the potentially higher the number of stages needed for successful reconstruction. This is because each participant holds a share of the secret, and more participants could mean more stages to collaborate and reconstruct the original secret. The threshold value represents the minimum number of participants required to reconstruct the secret. If the threshold is lower, fewer participants are needed, possibly resulting in fewer stages. Conversely, a higher threshold requires more participants to be involved, which can lead to additional stages.\\
\tab Indeed, the method employed for distributing secret shares holds a significant influence on the minimum stages required for successful secret reconstruction. In our case, the utilization of the Rainbow Antimagic Connection Number (RACN) principle introduces a noteworthy advantage, which guarantees that a minimum stage is achieved when a longest rainbow path comprising all the edge weights is acquired in a single endeavor. By obtaining this maximal rainbow path in a single attempt, we can effectively minimize the number of stages needed for reconstruction. This aligns with the underlying philosophy of the RACN principle, enabling us to optimize the reconstruction process and expedite the successful recovery of the secret.\\
\tab These factors are interconnected. A higher participant count might demand a higher threshold, consequently leading to more stages. Additionally, the distribution method can influence both the efficiency of the process and the minimum stages required. Hence, a well-balanced consideration of these factors is essential to optimize the scheme's reconstruction process.\\
\tab The subsequent section presents the outcomes related to the minimum number of $k$-shares within an $n$-participant group. Noteworthy results from ~\cite{ref_Arp, ref_Aru, ref_Aug, ref_Bud, ref_Char, ref_Daf, ref_Nal, ref_Sep_Su, ref_Sep} are referred.

\section{RACN of some Graph Operations on Path graph}
\begin{note}
	The minimum number of shares $k$ varies with $p$, involving $m$ participants in an $n$-participant group, where $n \geq p$.
\end{note}


\begin{theorem}
	For $p \geq 2$, the RACN of shadow graph of path graph is,
	$$\displaystyle D^{(2)}_{rac}(P_p)=
	\begin{cases}
	p+1 & \text{for $p \equiv 0(mod\,2)$} \\
	p+3 & \text{for $p \equiv 1(mod\,2)$}
	\end{cases}
	$$
\end{theorem}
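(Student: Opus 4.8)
The plan is to establish the stated value as two matching bounds — the lower bound $D^{(2)}_{rac}(P_p)\ge p+1$ for even $p$ and $\ge p+3$ for odd $p$, and the reverse inequality by an explicit construction — after fixing notation for the graph. Writing the two copies of $P_p$ as $x_1,\dots,x_p$ and $y_1,\dots,y_p$, the shadow graph $D^{(2)}(P_p)$ has edge set $\{x_ix_{i+1},\, y_iy_{i+1} : 1\le i\le p-1\}\cup\{x_iy_{i+1},\, x_{i+1}y_i : 1\le i\le p-1\}$, so that $|V|=2p$, $|E|=4(p-1)$, the interior vertices have degree $4$, and the two ends $x_1,x_p$ (and $y_1,y_p$) lie at distance $p-1$. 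I would record at the outset the elementary but essential fact that, under any bijective labeling $f:V\to\{1,\dots,2p\}$, the edges incident with a common vertex automatically receive pairwise distinct weights $f(u)+f(v)$; this makes the induced coloring proper and is used repeatedly when clearing short connectors.

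For the lower bound I would combine two observations. First, since any $x_1$--$x_p$ path has at least $p-1$ edges and a rainbow path needs that many distinct weights, $D^{(2)}_{rac}(P_p)\ge\mathrm{diam}=p-1$. Secondly, and this is where the parity enters, I would argue that $p-1$ weights (and, in the odd case, even $p$, $p+1$, $p+2$ weights) cannot be stretched to rainbow-connect every remaining pair simultaneously: one counts, through the antimagic sum structure, how many coincidences $f(u)+f(v)=f(u')+f(v')$ can be forced among the $4(p-1)$ edges once the weight budget is fixed, and shows that too small a budget strands some pair of end-type vertices without any rainbow connector. The parity of $p$ governs how evenly these forced coincidences can be distributed, which is precisely what separates the $+2$ and $+4$ corrections.

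For the upper bound I would exhibit an explicit bijection $f$, treating $p$ even and $p$ odd separately. The design goal is to single out one backbone path from $x_1$ to $x_p$ whose consecutive weights are strictly increasing, hence pairwise distinct, and then to choose all remaining labels so that the other $3(p-1)$ edges reuse these backbone weights, with only two new values surfacing when $p$ is even and four when $p$ is odd. Two structural features make this plausible: the paired cross edges $x_iy_{i+1}$ and $x_{i+1}y_i$ can be made equal in weight, for instance by running the labels up one copy and down the other, which collapses the $2(p-1)$ cross edges onto very few colors; and the second-copy edges can be aligned with the backbone so that each $f(y_i)+f(y_{i+1})$ lands on an already-used value. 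After pinning down $f$ by an explicit piecewise formula I would tabulate $\{w(e):e\in E\}$ and verify by direct count that its cardinality is exactly $p+1$, respectively $p+3$.

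It then remains to certify that this $f$ is genuinely rainbow antimagic, that is, that every pair of vertices admits some rainbow path, and I expect this to be the main obstacle. I would organize the verification by pair type — two interior vertices of the same copy, an $x$ paired with a $y$, and in particular the diametral pairs $x_1,x_p$ and $x_1,y_p$ — routing each pair either along the monotone backbone (rainbow by construction) or along a length-two detour through a neighbouring interior vertex, the latter cleared instantly by the distinct-weights-at-a-vertex fact. The delicate points are the genuinely long connectors, where one must confirm that the chosen route never repeats a backbone weight; handling these uniformly, most cleanly by an induction on $p$ that extends the labeling pattern two vertices at a time while respecting parity, is where I would spend the most effort, together with sharpening the lower-bound count so that it meets the construction exactly.
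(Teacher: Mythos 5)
Your upper-bound plan is essentially the paper's: label one copy of $P_p$ increasingly and the other decreasingly so that the two cross families $x_ty_{t+1}$ and $y_tx_{t+1}$ each collapse onto a single weight (two weights each when $p$ is odd) and the $y$-path weights land on the same arithmetic progression $\{5,9,\dots,4p-3\}$ as the $x$-path weights; the paper's labelings $\varsigma_1,\varsigma_2$ realize exactly this and give $p+1$, respectively $p+3$, colors. Like the paper, you would still owe the verification that the resulting coloring actually rainbow-connects every pair of the $2p$ vertices (the paper only counts distinct weights and stops), and your pair-type case analysis is a reasonable way to discharge that.

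The genuine gap is the lower bound. The only inequalities you actually establish are $D^{(2)}_{rac}(P_p)\ge \mathrm{diam}(D^{(2)}(P_p))=p-1$ and, implicitly, $\ge\Delta=4$; neither reaches $p+1$, let alone $p+3$. The entire step from $p-1$ up to the claimed values is carried by the sentence about counting how many coincidences $f(u)+f(v)=f(u')+f(v')$ can be "forced," which is not yet an argument: you would need to show that for an \emph{arbitrary} bijection $f:V\to\{1,\dots,2p\}$, a weight set of size at most $p$ (even case) or at most $p+2$ (odd case) necessarily strands some pair of vertices without a rainbow connector, and you give no mechanism for extracting such a pair from a coincidence count, nor any reason why the threshold should jump by exactly $2$ between the two parities. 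For what it is worth, the paper does not prove this either --- it simply asserts the lower bound $\mathscr{E}(P_p)+\delta(D^{(2)}(P_p))$ for even $p$ and $\mathscr{E}(P_p)+\Delta(D^{(2)}(P_p))$ for odd $p$ with no justification --- so as written both your proposal and the published argument establish only the upper bound. To close the theorem you need a concrete structural argument, for instance analyzing the weights that must appear on the degree-two end vertices $x_1,y_1,x_p,y_p$ and on every $x_1$--$y_1$ and $x_1$--$y_p$ connector, that forces at least two (respectively four) weights beyond the $p-1$ needed for a diametral rainbow path.
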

\begin{proof}
	Given that the vertex set of the shadow graph of a path graph is denoted as $\mathscr{V}(D^{(2)}(P_p)) = \{x_t,y_t : 1 \leq t \leq p\}$, it follows that the total number of participants in the $n$-participant group is $n=2p$.
	We know that, $\textbf{r}\mathfrak{ac}(P_p)=\mathscr{E}(P_p)=p-1$, hence we give the lower bound to be,\\
	$$\displaystyle D^{(2)}_{rac}(P_p) \geq
	\begin{cases}
	\mathscr{E}(P_p) + \delta(D^{(2)}(P_p)) & \text{for $p \equiv 0(mod\,2)$} \\
	\mathscr{E}(P_p) + \Delta(D^{(2)}(P_p)) & \text{for $p \equiv 1(mod\,2)$}
	\end{cases}
	$$
	
	\noindent The proof of the upper bound can be given by considering the following two cases.
	\begin{case}
		For $p \equiv 0(mod\,2)$
	\end{case}
	\tab Define an antimagic labeling $\varsigma_1: \mathscr{V}(D^{(2)}(P_p)) \longrightarrow \{1,2,3,\ldots,2p\}$ by: 
	$$
	\varsigma_1(x_t) =
	\begin{cases}
	2t-1 & \text{for $t=1,3,5,\ldots,p-1$}\\
	2t & \text{for $t=2,4,6,\ldots,p$}
	\end{cases}
	\tab[1.1cm]
	$$
	$$
	\varsigma_1(y_t) =
	\begin{cases}
	2p-2t+1 & \text{for $t=1,3,5,\ldots,p-1$}\\
	2p-2t+2 & \text{for $t=2,4,6,\ldots,p$}
	\end{cases}
	$$
	\noindent Based on the above labelling, the edge weights can be calculated as:
	\begin{alignat*}{4}
	W^1_{\varsigma_1}(x_tx_{t+1}) =\, & 4t+1 &\tab& \text{for $1 \leq t \leq p-1$}\\
	W^2_{\varsigma_1}(y_ty_{t+1}) =\, & 4p-4t+1 &\tab& \text{for $1 \leq t \leq p-1$}\\
	W^3_{\varsigma_1}(x_ty_{t+1}) =\, & 2p-1 &\tab& \text{for $1 \leq t \leq p-1$}\\
	W^4_{\varsigma_1}(y_tx_{t+1}) =\, & 2p+3 &\tab& \text{for $1 \leq t \leq p-1$}
	\end{alignat*}
	\noindent The total number of colors are obtained using N-term formula:
	$$
	\begin{array}{c c c c}
	U^{W^1_{\varsigma_1}}_N = a + (N-1)d \longleftrightarrow & 4p-3 = 5 + (N-1)(4) \longleftrightarrow & N = |W^1_{\varsigma_1}| = p-1\\
	U^{W^2_{\varsigma_1}}_N = a + (N-1)d \longleftrightarrow & 5 = 4p-3 + (N-1)(-4) \longleftrightarrow & N = |W^2_{\varsigma_1}| = p-1\\
	\end{array}
	$$
	\noindent Notably, $|W^3_{\varsigma_1}| = |W^4_{\varsigma_1}| = 1$. Hence, $\sum_{i=1}^{4}|W^i_{\varsigma_1}| = p - 1 + 2 = p+1$, i.e., $D^{(2)}_{rac}(P_p) \leq p+1$.
	
	\begin{case}
		For $p \equiv 1(mod\,2)$
	\end{case}
		\tab Define an antimagic labeling $\varsigma_2: \mathscr{V}(D^{(2)}(P_p)) \longrightarrow \{1,2,3,\ldots,2p\}$ by: 
		$$
		\varsigma_2(x_t) =
		\begin{cases}
		2t-1 & \text{for $t=1,3,5,\ldots,p$}\\
		2t & \text{for $t=2,4,6,\ldots,p-1$}
		\end{cases}
		\tab[1.1cm]
		$$
		$$
		\varsigma_2(y_t) =
		\begin{cases}
		2p-2t+2 & \text{for $t=1,3,5,\ldots,p$}\\
		2p-2t+1 & \text{for $t=2,4,6,\ldots,p-1$}
		\end{cases}
		$$
		\noindent Based on the above labelling, the edge weights can be calculated as:
		\begin{alignat*}{4}
			W^1_{\varsigma_2}(x_tx_{t+1}) =\, & 4t+1 &\tab& \text{for $1 \leq t \leq p-1$}\\
			W^2_{\varsigma_2}(y_ty_{t+1}) =\, & 4p-4t+1 &\tab& \text{for $1 \leq t \leq p-1$}\\
			W^3_{\varsigma_2}(x_ty_{t+1}) =\, & 2(p-1) &\tab& \text{for $t=1,3,5,\ldots,p-2$}\\
			W^4_{\varsigma_2}(x_ty_{t+1}) =\, & 2p &\tab& \text{for $t=2,4,6,\ldots,p-1$}\\
			W^5_{\varsigma_2}(y_tx_{t+1}) =\, & 2(p+2) &\tab& \text{for $t=1,3,5,\ldots,p-2$}\\
			W^6_{\varsigma_2}(y_tx_{t+1}) =\, & 2(p+1) &\tab& \text{for $t=2,4,6,\ldots,p-1$}
		\end{alignat*}
	\noindent The total number of colors are obtained using N-term formula:
	$$
	\begin{array}{c c c c}
	U^{W^1_{\varsigma_2}}_N = a + (N-1)d \longleftrightarrow & 4p-3 = 5 + (N-1)(4) \longleftrightarrow & N = |W^1_{\varsigma_2}| = p-1\\
	U^{W^2_{\varsigma_2}}_N = a + (N-1)d \longleftrightarrow & 5 = 4p-3 + (N-1)(-4) \longleftrightarrow & N = |W^2_{\varsigma_2}| = p-1\\
	\end{array}
	$$
	\noindent Notably, $|W^3_{\varsigma_2}| = |W^4_{\varsigma_2}| = |W^5_{\varsigma_2}| = |W^6_{\varsigma_2}| = 1$. Hence, $\sum_{i=1}^{6}|W^i_{\varsigma_2}| = p - 1 + 4 = p+3$, i.e., $D^{(2)}_{rac}(P_p) \leq p+3$.\\
	\noindent However, from Case 1 and Case 2,
	$$\displaystyle D^{(2)}_{rac}(P_p) \leq
	\begin{cases}
	p+1 & \text{for $p \equiv 0(mod\,2)$} \\
	p+3 & \text{for $p \equiv 1(mod\,2)$}
	\end{cases}
	$$
	\noindent On combining both upper and lower bounds, we get,
	$$\displaystyle D^{(2)}_{rac}(P_p) =
	\begin{cases}
	p+1 & \text{for $p \equiv 0(mod\,2)$} \\
	p+3 & \text{for $p \equiv 1(mod\,2)$}
	\end{cases}
	$$
\end{proof}
\begin{figure}[h!]
	\begin{center}
		\tab[-1cm]\includegraphics[width=0.55\textwidth]{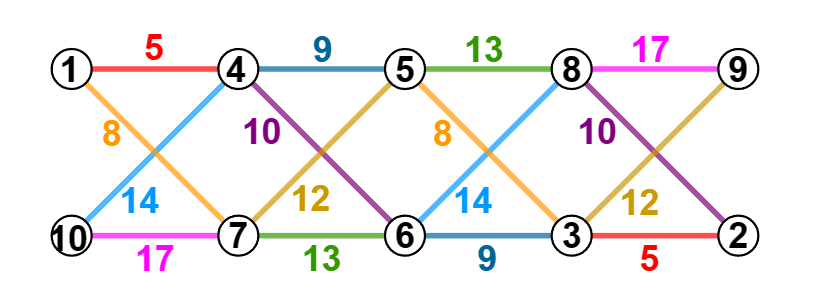}
		\caption{Illustration of RACN of shadow graph of Path graph}
	\end{center}
\end{figure}

\begin{theorem}
	For any $p \geq 2$, the RACN of splitting graph of path graph is,
	$$
	\displaystyle Spl^{(2)}_{rac}(P_p) = p+1 \tab \text{$\forall p \in N$}
	$$
\end{theorem}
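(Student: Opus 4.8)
The plan is to mirror the structure of Theorem 3.1: first fix the combinatorial data of $Spl^{(2)}(P_p)$, then read off $p+1$ as a lower bound coming from a forced rainbow path plus a small degree contribution, and finally match it by an explicit labelling whose induced edge-sum colouring reuses colours aggressively. Writing $\mathscr{V}(Spl^{(2)}(P_p)) = \{x_t, y_t : 1 \le t \le p\}$, so that $n = 2p$, the edge set consists of the spine edges $x_tx_{t+1}$ together with the two crossing families $x_ty_{t+1}$ and $y_tx_{t+1}$ for $1 \le t \le p-1$, giving $|\mathscr{E}| = 3(p-1)$; the interior vertices $x_2,\ldots,x_{p-1}$ have degree $4$, the vertices $y_1,y_p$ are pendants, and the embedded spine is a copy of $P_p$, for which $\mathbf{r}\mathfrak{ac}(P_p) = \mathscr{E}(P_p) = p-1$.

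For the upper bound I would define the bijection $\varsigma(x_t) = t$ and $\varsigma(y_t) = 2p+1-t$ for $1 \le t \le p$. A direct computation then gives $W(x_tx_{t+1}) = 2t+1$, i.e. the $p-1$ distinct odd values $3,5,\ldots,2p-1$, while both crossing families collapse to constants, $W(x_ty_{t+1}) = 2p$ and $W(y_tx_{t+1}) = 2p+2$. Since $2p$ and $2p+2$ are even they avoid the odd spine palette, so the colour set has size $(p-1)+2 = p+1$, proving $Spl^{(2)}_{rac}(P_p) \le p+1$. An alternating labelling in the style of $\varsigma_1,\varsigma_2$ above would serve equally well; the monotone choice is made only to force the two crossing colours to be constant.

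The second task is to certify that $\varsigma$ is a rainbow labelling with these $p+1$ colours, which I would split into three types of pairs. Two spine vertices $x_i,x_j$ are joined by the sub-path of $x_1x_2\cdots x_p$, whose spine colours are pairwise distinct. A spine vertex and a $y$-vertex are joined by one crossing edge into the spine followed by a spine sub-path, and since the crossing colour ($2p$ or $2p+2$) is not odd it clashes with no spine colour. The delicate type is a pair $y_i,y_j$, where the path must enter and leave the spine through two crossing edges, so I must pick one edge of type $x_ty_{t+1}$ (colour $2p$) and one of type $y_tx_{t+1}$ (colour $2p+2$); this is always possible except that $y_1$ and $y_p$ each admit only one crossing edge, so the boundary pairs (for instance $y_1,y_2$ via $y_1x_2x_1y_2$, with colours $2p+2,3,2p$) must be checked by hand.

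The main obstacle is the lower bound $Spl^{(2)}_{rac}(P_p) \ge p+1$. The forced rainbow path between the pendants satisfies $\mathrm{dist}(y_1,y_p)=p-1$, so it already requires $p-1$ distinct colours, matching the $\mathscr{E}(P_p)$ term; the real content is extracting the extra $+2$. My approach would be to suppose only $p-1$ (resp. $p$) colours are used, observe that a shortest rainbow $y_1$--$y_p$ path must then be the spine path $y_1x_2\cdots x_{p-1}y_p$ and must exhaust the whole palette, and then derive a contradiction from a second required rainbow path, say from $x_1$ to $y_p$: each candidate path repeats a spine colour unless certain edge-sums coincide, and the bijectivity of $\varsigma$ (for instance $\varsigma(x_1)\ne\varsigma(y_1)$ forces $W(x_1x_2)\ne W(y_1x_2)$) rules those coincidences out. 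Turning this local clash at the two ends into a clean $+2$ uniformly in $p$ is the step I expect to be fiddly and where most of the effort will go; the degree-$4$ vertices $x_2$ and $x_{p-1}$, whose four incident edges already carry four distinct colours, provide a complementary counting lever should the path argument prove awkward.
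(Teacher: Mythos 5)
Your upper-bound construction coincides with the paper's, up to swapping the names of the two vertex classes and complementing every label via $\ell \mapsto 2p+1-\ell$: in both versions the $p-1$ spine edges receive pairwise distinct odd weights while the two crossing families collapse to the even constants $2p$ and $2p+2$, giving exactly $p+1$ colours. The difference lies in what you try to establish beyond the colour count. The paper stops there: it never verifies that the induced colouring is rainbow connected, and its lower bound is the bare assertion $Spl^{(2)}_{rac}(P_p)\geq \mathscr{E}(P_p)+\delta(Spl^{(2)}(P_p))+1=(p-1)+1+1=p+1$, offered without argument. Your proposal adds the rainbow verification, and the key observation there is sound (each interior $y_i$ is incident to one crossing edge of each of the two even colours, so any $y_i$--$y_j$ pair can enter and leave the spine on differently coloured edges, with the adjacent-index and boundary pairs checked directly).

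The one genuine gap is the lower bound, which you yourself flag as unfinished. Beyond being ``fiddly,'' the sketch contains a step that is false as stated: if only $p-1$ colours were used, a rainbow $y_1$--$y_p$ path would indeed have length $p-1$, but it need not be the spine path $y_1x_2\cdots x_{p-1}y_p$, since equally short detours through interior $y$-vertices exist (e.g.\ $x_2y_3x_4$ replaces $x_2x_3x_4$), so the required case analysis is broader than you indicate and the clean $+2$ is not yet extracted. Measured against the paper this is not a regression --- the paper proves strictly less, asserting the lower bound outright --- but as a standalone proof your write-up leaves the inequality $Spl^{(2)}_{rac}(P_p)\geq p+1$ open.
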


\begin{proof}
	Given that the vertex set of the splitting graph of a path graph is denoted as $\mathscr{V}(Spl^{(2)}(P_p)) = \{x_t,y_t : 1 \leq t \leq p\}$, it follows that the total number of participants in the $n$-participant group is $n=2p$.
	We know that, $\textbf{r}\mathfrak{ac}(P_p)=\mathscr{E}(P_p)=p-1$, hence we give the lower bound to be,
	$$
	\displaystyle Spl^{(2)}_{rac}(P_p) \geq \mathscr{E}(P_p)+\delta(Spl^{(2)}(P_p))+1
	$$
	\noindent Inorder to prove the upper bound consider defining an antimagic labeling $\varsigma_3: \mathscr{V}(Spl^{(2)}(P_p)) \longrightarrow \{1,2,3,\ldots,2p\}$ by: 
	\begin{alignat*}{4}
		\varsigma_3(x_t) =\,& t &\tab& \text{for $1 \leq t \leq p$}\\
		\varsigma_3(y_t) =\,& 2p-t+1 &\tab& \text{for $1 \leq t \leq p$}
	\end{alignat*}
	\noindent Based on the above labelling, the edge weights can be calculated as:
	\begin{alignat*}{4}
	W^1_{\varsigma_3}(y_ty_{t+1}) =\, & 4p-2t+1 &\tab& \text{for $1 \leq t \leq p-1$}\\
	W^2_{\varsigma_3}(x_ty_{t+1}) =\, & 2p &\tab& \text{for $1 \leq t \leq p-1$}\\
	W^3_{\varsigma_3}(y_tx_{t+1}) =\, & 2(p+1) &\tab& \text{for $1 \leq t \leq p-1$}
	\end{alignat*}
	\noindent The total number of colors are obtained using N-term formula:
	$$
	\begin{array}{c c c c}
	U^{W^1_{\varsigma_3}}_N = a + (N-1)d \longleftrightarrow & 2p+3 = 4p-1 + (N-1)(-2) \longleftrightarrow & N = |W^1_{\varsigma_3}| = p-1\\
	\end{array}
	$$
	\noindent Notably, $|W^2_{\varsigma_3}| = |W^3_{\varsigma_3}| = 1$. Hence, $\sum_{i=1}^{3}|W^i_{\varsigma_3}| = p - 1 + 2 = p+1$, i.e., $Spl^{(2)}_{rac}(P_p) \leq p+1$.\\
	\noindent On combining the upper and lower bounds, we get,  
	$$
	Spl^{(2)}_{rac}(P_p) = p+1 \tab \text{$\forall p \in N$}
	$$
\end{proof}

\begin{figure}[h!]
	\begin{center}
		\tab[-1cm]\includegraphics[width=0.55\textwidth]{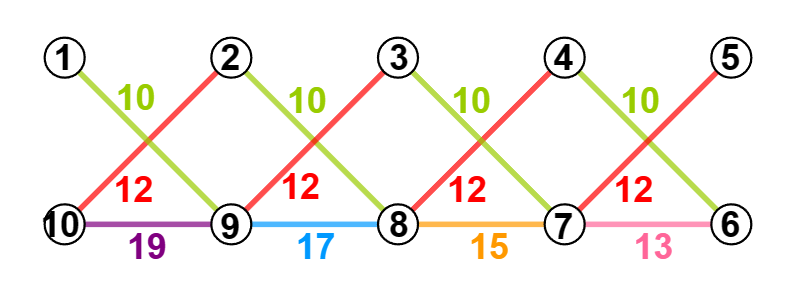}
		\caption{Illustration of RACN of splitting graph of Path graph}
	\end{center}
\end{figure}

\begin{theorem}
	For any $p \geq 2$, the RACN of mycielski graph of path graph is,
	$$
	\mu_{rac}(P_p) = 2p \tab \text{$\forall p \in N$}
	$$
\end{theorem}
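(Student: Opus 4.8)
The plan is to follow the two-sided strategy of the preceding two theorems: first fix the host graph and its order, then obtain a lower bound of $2p$ from the colors that are \emph{forced}, and finally match it with one explicit antimagic bijection. Writing the Mycielskian as $\mathscr{V}(\mu(P_p)) = \{x_t,y_t : 1 \le t \le p\} \cup \{z\}$, where the $x_t$ are the original path vertices, the $y_t$ their shadows, and $z$ the apex joined to every $y_t$, the group has $n = 2p+1$ participants and the edge set splits into the four families $x_tx_{t+1}$, $y_tx_{t+1}$, $x_ty_{t+1}$ (for $1 \le t \le p-1$) and $zy_t$ (for $1 \le t \le p$), totalling $(p-1)+2(p-1)+p = 4p-3$ edges. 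I would record $\textbf{r}\mathfrak{ac}(P_p)=\mathscr{E}(P_p)=p-1$ as the base fact to be reused.

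For the lower bound I would argue, exactly in the degree-plus-size form of the splitting-graph lemma, that $\mu_{rac}(P_p) \ge \mathscr{E}(P_p) + \deg_{\mu(P_p)}(z) + 1 = (p-1)+p+1 = 2p$. The mechanism is that under any antimagic, hence injective, labeling $\varsigma$ the $p$ edges $zy_1,\dots,zy_p$ automatically receive the $p$ pairwise distinct weights $\varsigma(z)+\varsigma(y_t)$, contributing $p$ colors; the remaining $p$ colors are then accounted for by the requirement that the original path, together with the rainbow detours through $z$ needed to join the shadow class, be rainbow-traversable. I would state this in the same form as the earlier lemmas so that it dovetails cleanly with the upper bound.

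For the upper bound I would construct a single bijection $\varsigma_4 : \mathscr{V}(\mu(P_p)) \to \{1,\dots,2p+1\}$, compute the weight $W(uv)=\varsigma_4(u)+\varsigma_4(v)$ on each of the four edge families, and count distinct values with the same N-term device $U_N=a+(N-1)d$ used above. The structural target is: (i) force the two cross families $y_tx_{t+1}$ and $x_ty_{t+1}$ onto a common small weight set by keeping $\varsigma_4(y_t)-\varsigma_4(x_t)$ controlled in $t$; (ii) let the apex family occupy a contiguous block of $p$ weights; and (iii) arrange exactly one further collision between the remaining families so that the grand total lands on $2p$ rather than the $2p+1$ produced by the \emph{most symmetric} labelings. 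That $2p$ is genuinely attainable I would confirm first at $p=2$, where $\varsigma_4(x_1,x_2,y_1,y_2,z)=(1,4,3,2,5)$ yields the four weights $\{3,5,7,8\}$ with the lone collision $W(y_1x_2)=W(zy_2)=7$, and a direct check shows every pair is joined by a rainbow path. Rainbow connectivity in general I would verify by using $z$ as a hub: two shadows via $y_i-z-y_j$, a shadow and an original via a length-two or length-three detour, and two originals either along a short subpath or via $x_i-y_{i\pm1}-z-y_{j\mp1}-x_j$, checking in each case that the weights along the chosen route are distinct.

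The main obstacle is the upper-bound construction, not the bookkeeping. Every \emph{obvious} labeling overshoots: for instance $\varsigma_4(x_t)=t$, $\varsigma_4(y_t)=2p+1-t$, $\varsigma_4(z)=2p+1$ collapses the two cross families to the singletons $2p$ and $2p+2$ and places the apex weights in the disjoint block $\{3p+2,\dots,4p+1\}$, so that path, cross and apex contributions are mutually disjoint and give $(p-1)+1+1+p=2p+1$ colors, one too many. The delicate point is therefore to perturb $\varsigma_4$ so that precisely one extra collision is created (an apex weight meeting a cross weight, say) without breaking either the collapse of the two cross families or, more importantly, the existence of a rainbow path between every pair of vertices. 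Controlling the exact color count and the rainbow-connectivity simultaneously is what will carry the real weight of the proof.
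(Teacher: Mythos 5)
Your overall strategy is the same as the paper's (lower bound of the form $\mathscr{E}(P_p)+\deg(a)+1=2p$ forced by the $p$ distinct apex weights, matched by an explicit antimagic bijection whose color count is read off family by family with the N-term formula), and your structural target for the labeling --- collapse each cross family to a constant, let the apex edges fill a contiguous block of $p$ weights, and engineer exactly one collision between a cross constant and an apex weight --- is precisely what the paper's construction realizes. But your write-up stops short of the one thing that carries the proof: you never exhibit the bijection for general $p$. You verify $p=2$ by hand, show that the ``obvious'' labeling overshoots to $2p+1$, and then declare the required perturbation to be the delicate open point. That is the gap: without a general labeling the upper bound $\mu_{rac}(P_p)\le 2p$ is not established for any $p\ge 3$.

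The resolution is a single clean choice rather than a perturbation: put the apex label in the \emph{middle} of the range instead of at the top. The paper takes $\varsigma_4(a)=p+1$, $\varsigma_4(x_t)=2p-t+2$, $\varsigma_4(y_t)=t$. Then $W(x_tx_{t+1})=4p-2t+3$ gives $p-1$ colors, the cross families are the constants $W(x_ty_{t+1})=2p+3$ and $W(y_tx_{t+1})=2p+1$, and the apex weights $W(ay_t)=p+t+1$ sweep $\{p+2,\dots,2p+1\}$; the top apex weight $2p+1$ absorbs the entire $y_tx_{t+1}$ family, yielding $(p-1)+1+p=2p$. Your candidate with $\varsigma_4(z)=2p+1$ fails exactly because the apex block $\{3p+2,\dots,4p+1\}$ is pushed clear of the cross constants; shifting the apex label to $p+1$ is the whole trick. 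One further caution: your lower-bound narrative (``the remaining $p$ colors are accounted for by the rainbow detours through $z$'') is as heuristic as the paper's own one-line assertion, so if you intend this as a complete proof you should either prove that inequality or cite it; and your general rainbow-connectivity check via $z$ as a hub still needs to be run against the specific labeling above, since the two cross families being monochromatic constrains which detours are actually rainbow.
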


\begin{proof}
	Given that the vertex set of the mycielski graph of a path graph is denoted as $\mathscr{V}(\mu(P_p)) = \{a\} \cup \{x_t,y_t : 1 \leq t \leq p\}$, it follows that the total number of participants in the $n$-participant group is $n=2p+1$.
	We know that, $\textbf{r}\mathfrak{ac}(P_p)=\mathscr{E}(P_p)=p-1$, hence we give the lower bound to be,
	$$
	\displaystyle \mu_{rac}(P_p) \geq \mathscr{E}(P_p)+\Delta(\mu(P_p))+1
	$$
	\noindent Inorder to prove the upper bound consider defining an antimagic labeling $\varsigma_4: \mathscr{V}(\mu(P_p)) \longrightarrow \{1,2,3,\ldots,2p+1\}$ by: 
	\begin{alignat*}{4}
	\varsigma_4(a) =\,& p+1 &\tab& \text{$\forall p \in N$}\\
	\varsigma_4(x_t) =\,& 2p-t+2 &\tab& \text{for $1 \leq t \leq p$}\\
	\varsigma_4(y_t) =\,& t &\tab& \text{for $1 \leq t \leq p$}\\
	\end{alignat*}
	\noindent Based on the above labelling, the edge weights can be calculated as:
	\begin{alignat*}{4}
	W^1_{\varsigma_4}(x_tx_{t+1}) =\, & 4p-2t+3 &\tab& \text{for $1 \leq t \leq p-1$}\\
	W^2_{\varsigma_4}(x_ty_{t+1}) =\, & 2p+3 &\tab& \text{for $1 \leq t \leq p-1$}\\
	W^3_{\varsigma_4}(y_tx_{t+1}) =\, & 2p+1 &\tab& \text{for $1 \leq t \leq p-1$}\\
	W^4_{\varsigma_4}(ay_t) =\, & p+t+1 &\tab& \text{for $1 \leq t \leq p$}
	\end{alignat*}
	
	\noindent As the edge weight $W^3_{\varsigma_4}$ has already been included in $W^4_{\varsigma_4}$, it is deemed superfluous and, therefore, it is excluded. The total number of colors are obtained using N-term formula:
	$$
	\begin{array}{c c c c}
	U^{W^1_{\varsigma_4}}_N = a + (N-1)d \longleftrightarrow & 2p+5 = 4p+1 + (N-1)(-2) \longleftrightarrow & N = |W^1_{\varsigma_4}| = p-1\\
	U^{W^4_{\varsigma_4}}_N = a + (N-1)d \longleftrightarrow & 2p+1 = p+2 + (N-1)(1) \tab[0.5cm]\longleftrightarrow & \tab[-0.8cm] N = |W^4_{\varsigma_4}| = p
	\end{array}
	$$
	\noindent Notably, $|W^2_{\varsigma_4}| = 1$. Hence, $\sum_{i=1,2,4}|W^i_{\varsigma_4}| = p - 1 + p + 1 = 2p$, i.e., $\mu_{rac}(P_p) \leq 2p$.\\
	\noindent On combining the upper and lower bounds, we get,  
	$$
	\mu_{rac}(P_p) = 2p \tab \text{$\forall p \in N$}
	$$
\end{proof}
\begin{figure}[h!]
	\begin{center}
		\tab[-1cm]\includegraphics[width=0.55\textwidth]{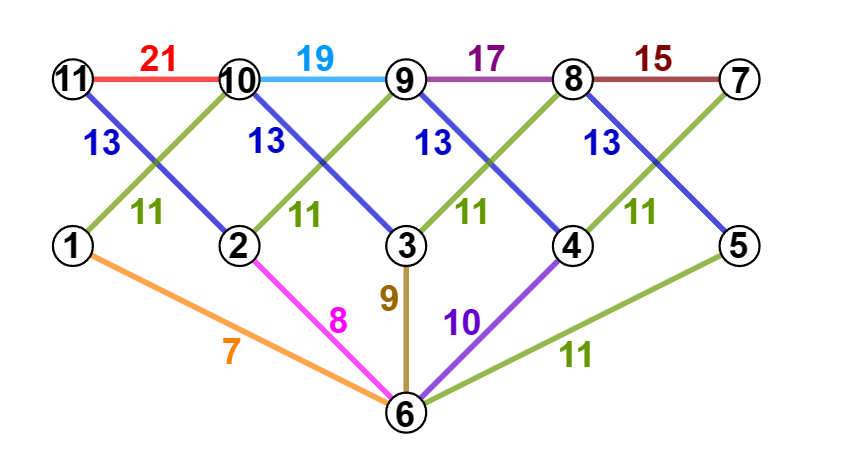}
		\caption{Illustration of RACN of mycielski graph of Path graph}
	\end{center}
\end{figure}

\noindent The above results yield the following key observations:
\begin{itemize}
	\item[1.] For a graph with $n=2p$ participants, the number of shares to be distributed are:
	$$a.\,\, \displaystyle D^{(2)}_{rac}(P_p)= k =
	\begin{cases}
	p+1 & \text{for $p \equiv 0(mod\,2)$} \\
	p+3 & \text{for $p \equiv 1(mod\,2)$}
	\end{cases}
	\tab[5.5cm]
	$$
	\tab[0.9cm]$b.\,\, \displaystyle Spl^{(2)}_{rac}(P_p)= k = p+1$
	
	\item[2.] For a graph with $n=2p+1$ participants, the number of shares to be distributed is:\\
	\tab[0.9cm]$a.\,\, \displaystyle \mu_{rac}(P_p)= k = 2p$\\
	
	\item[3.] The minimum number of participants $m$ required from an $n$-participant group is:\\
	\tab[0.9cm] $\displaystyle a. \,\, m(D^{(2)}_{rac}(P_p)) = \frac{1}{2}(2p+5+(-1)^{p-1})$\\
	$$\displaystyle b.\,\, m(Spl^{(2)}_{rac}(P_p)) =
	\begin{cases}
		p+2 & \text{for $p \geq 2 ; p \neq 3$}\\
		p+1 & \text{for $p = 3$}
	\end{cases}
	\tab[5.5cm]
	$$
	\tab[0.8cm] $\displaystyle c. \,\, m(\mu_{rac}(P_p)) = 2p+1$

	\item[4.] The number of reconstruction phases is:
	$$\displaystyle a.\,\, RP(D^{(2)}_{rac}(P_p)) =
	\begin{cases}
	1 & \text{when p is even}\\
	2 & \text{when p is odd}
	\end{cases}
	\tab[5.8cm]
	$$
	$$\displaystyle b.\,\, RP(Spl^{(2)}_{rac}(P_p)) =
	\begin{cases}
	1 & \text{for $p \geq 2 ; p \neq 3$}\\
	2 & \text{for $p = 3$}
	\end{cases}
	\tab[5.4cm]
	$$
	\tab[1.1cm] $\displaystyle c. \,\, RP(\mu_{rac}(P_p)) = \frac{1}{4}(2p+1+(-1)^{p-1})$\\
\end{itemize}

\section{Conclusion}
\tab The comprehensive analysis in the paper has investigated the rainbow antimagic connection number (RACN) of path graphs, utilizing some graph operations. It is pivotal to emphasize that ascertaining the precise value of RACN is a well-established NP-Hard problem, signifying its inherent complexity. Therefore, we propose subsequent open problems to stimulate further exploration in this area of research.
\begin{itemize}
	\item How does the scheme address the loss or corruption of a secret code? What measures are in place to ensure successful recovery of the original secret despite such occurrences?
	\item How does the absence of closed circuits impact the scheme and secret communication?
	\item How can the number of communication rounds needed post-secret recovery be minimized?
\end{itemize}

%
%


\begin{thebibliography}{8}
	
	\bibitem{ref_Arp}
	Arputhamary, I. A.,  Mercy, M. H.: Rainbow coloring of certain classes of graphs. \textit{In: International Conference on Mathematical Computer Engineering-ICMCE}, pp. 71, (2013)
	
	\bibitem{ref_Aru}
	Arumugam, S., Premalatha, K., Ba\v{c}a, M., Semani\v{c}ov\'{a}-Fe\v{n}ov\v{c}\'{i}kov\'{a}, A.: Local Antimagic Vertex Coloring of a Graph. \textit{Graphs and Combinatorics} (33), 275--285 (2017) 
	
	\bibitem{ref_Aug} 
	Agustin, I.H., Dafik, Gembong, A., Alfarisi, R.:  On Rainbow k-Connection Number of Special Graphs and It’s Sharp Lower Bound. \textit{In: Journal of physics. Conference series}, vol. 855, (2017)
	
	\bibitem{ref_Bei} 
	Beimel, A.:  Secret-sharing schemes: A survey. In: International conference on coding and cryptology . \textit{Berlin, Heidelberg: Springer Berlin Heidelberg}, pp. 11-46, (2011)
	
	\bibitem{ref_Bla} 
	Blakley, G. R.: Safeguarding cryptographic keys. In Managing Requirements Knowledge, \textit{International Workshop on IEEE Computer Society}, pp. 313-313, (1979)
	
	\bibitem{ref_BlBo} 
	Blakley, G. R.,  Borosh, I.: Rivest-Shamir-Adleman public key cryptosystems do not always conceal messages. \textit{Computers \& mathematics with applications}, \textbf{5}(3), 169-178 (1979)
	
	\bibitem{ref_Bud} 
	Budi H.S., Dafik, Tirta I.M., Agustin I.H., Kristiana A.I.: On rainbow antimagic coloring of graphs. \textit{In: Journal of physics. Conference series}, vol. 1832(1), 012016 (2021)
	
	\bibitem{ref_Char} 
	Chartrand, G., Johns, G.L., Mckeon, K.A., Zhang, P.: Rainbow connection in graphs. \textit{Math. Bohem}, ({133}), 85–98 (2008)
	
	\bibitem{ref_Chi} 
	Chien, H. Y., Jan, J. K., Tseng, Y. M.: A practical $(t,n)$ multi-secret sharing scheme. \textit{IEICE transactions on fundamentals of electronics, communications and computer sciences}, \textbf{83}(12), 2762-2765 (2000)
	
	\bibitem{ref_Daf}
	Dafik, Susanto, F., Alfarisi R., Septory B.J., Agustin, I. H., Venkatachalam, M.: On Rainbow Antimagic Coloring of Graphs. \textit{Advanced Mathematical Models \& Applications} (6), 278--291 (2021)
	
	\bibitem{ref_Fal} 
	Falc\'{o}n, R. M., Mohanapriya, N.,  Aparna, V.: Optimal Shadow Allocations of Secret Sharing Schemes Arisen from the Dynamic Coloring of Extended Neighborhood Coronas. \textit{Mathematics}, \textbf{10}(12), 2018, (2022)
	
	\bibitem{ref_Nal}
	Nalliah, M., Shankar, R., Wang, T. M.: Local antimagic vertex coloring for generalized friendship graphs. \textit{Journal of Discrete Mathematical Sciences and Cryptography}, 1-16, (2022)
	
	\bibitem{ref_Sep_Su}
	Septyanto, F., Sugeng, K.A.:  On the rainbow and strong rainbow connection numbers of the m-splitting of the complete graph $K_n$.  \textit{Procedia Computer Science}, \textbf{74}, 155-161 (2015)
	
	\bibitem{ref_Sep}
	Septory, B.J., Utoyo, M.I., Dafik, Sulistiyono, B., Agustin, I.H.:  On rainbow antimagic coloring of special graphs. \textit{In: Journal of physics. Conference series}, vol. 1836, (2021)
	
	\bibitem{ref_Sha} 
	Shamir, A.: How to share a secret. \textit{Communications of the ACM}, \textbf{22}(11), 612-613, (1979)
	
	\bibitem{ref_Cal_Bk}
	\c{C}alkavur, S., Bonnecaze, A., dela Cruz, R., Sol\'{e}, P.: Code Based Secret Sharing Schemes: \textit{Applied Combinatorial Coding Theory}. (2022)
\end{thebibliography}
\end{document}